\newcommand{\Tr}{\textrm{\rm Tr}}
\newcommand{\Ord}{\textrm{\rm Ord}}
\newcommand{\CLI}{\mathscr{C}}
\begin{document}

\def\bbbf{{\rm I\!F}}

\title{The $b$-symbol weight distributions of all semiprimitive irreducible cyclic codes}

\titlerunning{The $b$-symbol weight distributions of all semiprimitive irreducible cyclic codes}

\author{Gerardo Vega}

\authorrunning{Gerardo Vega} 

\institute{Gerardo Vega \at
              Direcci\'on General de C\'omputo y de Tecnolog\'{\i}as de Informaci\'on y Comunicaci\'on, Universidad Nacional Aut\'onoma de M\'exico, 04510 Ciudad de M\'exico, MEXICO \\
              \email{gerardov@unam.mx} 
}

\date{Received: date / Accepted: date}

\maketitle

\begin{abstract}
Up to a new invariant $\mu(b)$, the complete $b$-symbol weight distribution of a particular kind of two-weight irreducible cyclic codes, was recently obtained by Zhu et al. [Des. Codes Cryptogr., 90 (2022) 1113-1125]. The purpose of this paper is to simplify and generalize the results of Zhu et al., and obtain the $b$-symbol weight distributions of all one-weight and two-weight semiprimitive irreducible cyclic codes.

\keywords{b-symbol error \and b-symbol Hamming weight distribution \and semiprimitive irreducible cyclic code}
\subclass{94B14 \and 11T71 \and 94B27}
\end{abstract}

\section{Introduction}\label{intro}
Up to a new invariant $\mu(b)$, the complete $b$-symbol weight distribution of some irreducible cyclic codes was recently obtained in \cite{Zhu}. The irreducible cyclic codes considered therein, belong to a particular kind of two-weight irreducible cyclic codes. Thus, the purpose of this paper is to present a generalization for the invariant $\mu(b)$, which will allow us to obtain the $b$-symbol weight distributions of all one-weight and two-weight irreducible cyclic codes, excluding only the exceptional two-weight irreducible cyclic codes studied in \cite{Schmidt}.

This work is organized as follows: In Section \ref{secdos}, we fix some notation and recall some definitions and some known results to be used in subsequent sections. Section \ref{sectres} is devoted to presenting preliminary results. Particularly, in this section, we give an alternative proof of an already known result which determines the weight distributions of all one-weight and two-weight semiprimitive irreducible cyclic codes. In Section \ref{seccuatro}, we use such alternative proof, in order to determine the $b$-symbol weight distributions of all one-weight and two-weight semiprimitive irreducible cyclic codes. 
\section{Notation, definitions and known results}\label{secdos}
Unless otherwise specified, throughout this work we will use the following:

\medskip       
\noindent
{\bf Notation.} For integers $v$ and $w$, with $\gcd(v,w)=1$, $\mbox{\Ord}_v(w)$ will denote the {\em multiplicative order} of $w$ modulo $v$. By using $p$, $t$, $q$, $r$, and $\Delta$, we will denote positive integers such that $p$ is a prime number, $q=p^t$ and $\Delta=\frac{q^r-1}{q-1}$. From now on, $\gamma$ will denote a fixed primitive element of $\bbbf_{q^r}$. Let $u$ be an integer such that $u|(q^r-1)$. For $i=0,1,\cdots,u-1$, we define ${\cal C}_i^{(u,q^r)}:=\gamma^i \langle \gamma^u \rangle$, where $\langle \gamma^u \rangle$ denotes the subgroup of $\bbbf_{q^r}^*$ generated by $\gamma^u$. The cosets ${\cal C}_i^{(u,q^r)}$ are called the {\em cyclotomic classes} of order $u$ in $\bbbf_{q^r}$. For an integer $u$, such that $\gcd(p,u)=1$, $p$ is said to be {\em semiprimitive modulo} $u$ if there exists a positive integer $d$ such that $u|(p^d+1)$. Additionally, we will denote by ``$\mbox{\Tr}_{\bbbf_{q^r}/\bbbf_q}$" the {\em trace mapping} from $\bbbf_{q^r}$ to $\bbbf_q$.

\medskip       
\medskip       
\noindent
{\bf Main assumption.} From now on, we are going to use $n$ and $N$ as integers in such a way that $nN=q^r-1$, with the important assumption that $r=\mbox{\Ord}_n(q)$. Under these circumstances, observe that if $h_N(x) \in  \bbbf_{q}[x]$ is the minimal polynomial of $\gamma^{-N}$, then $h_N(x)$ is parity-check polynomial of an irreducible cyclic code of length $n$ and dimension $r$ over $\bbbf_q$.

\medskip       
Denote by $w_H(\cdot)$ the usual Hamming weight function. For $1\leq b<n$, let the Boolean function $\bar{\cal Z}:\bbbf_{q}^b \to \{0,1\}$ be defined by $\bar{\cal Z}(v)=0$ iff $v$ is the zero vector in $\bbbf_{q}^b$. The $b$-{\em symbol Hamming weight}, $w_b(\mathbf{x})$, of $\mathbf{x}=(x_0,\cdots,x_{n-1}) \in \bbbf_{q}^n$ is defined as

$$w_b(\mathbf{x})\!:=\!w_H(\bar{\cal Z}(x_0,\cdots,x_{b-1}),\bar{\cal Z}(x_1,\cdots,x_{b}),\cdots,\bar{\cal Z}(x_{n-1},\cdots,x_{b+n-2 \!\!\!\!\!\pmod{n}}))\:.$$

\noindent
When $b=1$, $w_1(\mathbf{x})$ is exactly the Hamming weight of $\mathbf{x}$, that is $w_1(\mathbf{x})=w_H(\mathbf{x})$. For any $\mathbf{x}, \mathbf{y} \in \bbbf_{q}^n$, we define the b-{\em symbol distance} (b-distance for short) between $\mathbf{x}$ and $\mathbf{y}$, $d_b(\mathbf{x},\mathbf{y})$, as $d_b(\mathbf{x},\mathbf{y}):=w_b(\mathbf{x}-\mathbf{y})$, and for a code $\CLI$, the b-{\em symbol minimum Hamming distance}, $d_b(\CLI)$, of $\CLI$ is defined as $d_b(\CLI):=\mbox{min } d_b(\mathbf{x},\mathbf{y})$, with $\mathbf{x},\mathbf{y} \in \CLI$ and $\mathbf{x} \neq \mathbf{y}$. Let $A_i^{(b)}$ denote the number of codewords with $b$-symbol Hamming weight $i$ in a code $\CLI$ of length $n$. The $b$-{\em symbol Hamming weight enumerator} of $\CLI$ is defined by

$$1+A_1^{(b)}T+A_2^{(b)}T^2+\cdots+A_n^{(b)}T^n\;.$$

\noindent
Note that if $b=1$, then the $b$-symbol Hamming weight enumerator of $\CLI$ is the ordinary Hamming weight enumerator of $\CLI$. If $\CLI$ is an $(n,M,d_b(\CLI))_q$ $b$-symbol code, with $b\leq d_b(\CLI)\leq n$, then Ding et al. \cite{Ding2} established the Singleton-type bound $M\leq q^{n-d_b(\CLI)+b}$. An $(n,M,d_b(\CLI))_q$ $b$-symbol code $\CLI$ with $M=q^{n-d_b(\CLI)+b}$ is called a {\em maximum distance separable} (MDS for short) $b$-symbol code.

The following gives an explicit description of an irreducible cyclic code of length $n$ and dimension $r$ over $\bbbf_q$ (recall that $nN=q^r-1$ and $r=\mbox{\Ord}_n(q)$).

\begin{definition}\label{defcero} 
Let $q$, $r$, $n$, and $N$ be as before. Then the set 

\[\CLI:=\{(\mbox{\Tr}_{\bbbf_{q^r}/\bbbf_q}(a \gamma^{Ni}))_{i=0}^{n-1} \: | \: a \in \bbbf_{q^r} \} \; ,\]

\noindent
is called an {\em irreducible cyclic code} of length $n$ and dimension $r$ over $\bbbf_q$.
\end{definition} 

An important kind of irreducible cyclic codes are the so-called {\em semiprimitive irreducible cyclic codes}:

\begin{definition}\label{defsemi}
\cite[Definition 4]{Vega} With our current notation and main assumption, fix $u=\gcd(\Delta,N)$. Then, any $[n,r]$ irreducible cyclic code over $\bbbf_{q}$ is semiprimitive  if $u \geq 2$ and the prime $p$ is semiprimitive modulo $u$.
\end{definition}

Apart from a few exceptional codes, it is well known that all two-weight irreducible cyclic codes are semiprimitive. In fact, it is conjectured in \cite{Schmidt} that the number of these exceptional codes is eleven.

The {\em canonical additive character} of $\bbbf_q$ is defined as follows:

$$\chi(x):=e^{2\pi \sqrt{-1}\Tr(x)/p}\ , \ \ \ \  \mbox{ for all } x \in \bbbf_{q} \ ,$$

\noindent
where ``Tr" denotes the trace mapping from $\bbbf_{q}$ to the prime field $\bbbf_p$. Let $a\in\bbbf_q$. The orthogonality relation for the canonical additive character $\chi$ of $\bbbf_q$ is given by (see for example \cite[Chapter 5]{Lidl}):

$$\sum_{x \in \bbbf_{q}} \chi(ax)=\left\{ \begin{array}{cl}
		\;q\; & \mbox{ if $a=0$,} \\
\\
		\;0\; & \mbox{ otherwise.}
			\end{array}
\right .$$

\noindent
This property plays an important role in numerous applications of finite fields. Among them, this property is useful for determining the Hamming weight of a given vector over a finite field; for example if $V=(a_0,a_1,\cdots,a_{n-1}) \in \bbbf_{q}^n$, then

\begin{eqnarray}\label{eqOrtbis}
w_H(V)=n-\frac{1}{q}\sum_{i=0}^{n-1}\sum_{y \in \bbbf_{q}}\chi(ya_i)\;.
\end{eqnarray}

Let $\chi'$ be the canonical additive character of $\bbbf_{q^r}$ and let $u \geq 1$ be an integer such that $u|(q^r-1)$. For $i=0,1,\cdots,u-1$, the $i$-th {\em Gaussian period}, $\eta_i^{(u,q^r)}$, of order $u$ for $\bbbf_{q^r}$ is defined to be

$$\eta_i^{(u,q^r)}:=\sum_{x \in {\cal C}_i^{(u,q^r)}} \chi'(x) \; .$$

\noindent
Suppose that $a\in {\cal C}_i^{(u,q^r)}$. Since $\sum_{x \in \bbbf_{q^r}}\chi'(ax^u)=u\eta_i^{(u,q^r)}+1$ and $\eta_0^{(1,q^r)}+1=0$, the following result is a direct consequence of Theorem 1 in \cite{Moisio}: 

\begin{theorem}\label{teouno}
With our notation suppose that $rt=2sd$ and $u | (p^d+1)$, for positive integers $s$, $d$ and $u$. Then 

$$\frac{u\eta_i^{(u,q^r)} + 1}{q^{r/2}}= \left\{ \begin{array}{cl}
		(-1)^{s-1}(u-1) & \mbox{ if } i \equiv \delta \pmod{u} \:, \\
\\
		(-1)^{s} & \mbox{ if } i \not\equiv \delta \pmod{u} \:,
			\end{array}
\right .$$

\noindent
where the integer $\delta$ is defined in terms of the following two cases:

$$\delta:=
\left\{ \begin{array}{cl}
		0\; & \mbox{if $u=1$; or $p=2$; or $p>2$ and $2|s$; or $p>2$, $2 \nmid s$, and $2|\frac{p^d+1}{u}$} \:, \\
		  &  \\
		\frac{u}{2}\; & \mbox{if $p>2$, $2 \nmid s$ and $2 \nmid \frac{p^d+1}{u}$} \:.
			\end{array}
\right .$$
\end{theorem}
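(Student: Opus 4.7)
The plan is to reduce the statement to Theorem 1 of \cite{Moisio}, which evaluates the Gauss-type sum $\sum_{x \in \bbbf_{q^r}} \chi'(ax^u)$ in closed form precisely under the semiprimitive hypotheses $rt = 2sd$ and $u \mid (p^d+1)$. The connection between this sum and the Gaussian periods is given for free by the identity $\sum_{x \in \bbbf_{q^r}}\chi'(ax^u) = u\eta_i^{(u,q^r)} + 1$ for $a \in {\cal C}_i^{(u,q^r)}$ already recorded just above the statement; once this identity is in hand, the content of the theorem is just a rephrasing of Moisio's output divided by $q^{r/2}$.

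First I would give a one-line justification of the identity: splitting off $x=0$ contributes $1$, while on $\bbbf_{q^r}^*$ the $u$-to-one map $x \mapsto x^u$ has image ${\cal C}_0^{(u,q^r)}$, so that $\sum_{x \in \bbbf_{q^r}^*}\chi'(ax^u) = u \sum_{y \in a\,{\cal C}_0^{(u,q^r)}} \chi'(y) = u\,\eta_i^{(u,q^r)}$ whenever $a \in {\cal C}_i^{(u,q^r)}$, because $a\,{\cal C}_0^{(u,q^r)} = {\cal C}_i^{(u,q^r)}$.

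Next I would pick a representative $a=\gamma^i \in {\cal C}_i^{(u,q^r)}$ and invoke Moisio's theorem, which expresses this sum as $(-1)^{s-1}(u-1)\,q^{r/2}$ or $(-1)^{s}\,q^{r/2}$ according to whether the class index $\bmod\,u$ coincides with a distinguished residue $\delta$ or not, where $\delta$ depends only on the parities of $p$, $s$, and $(p^d+1)/u$ exactly as listed in the statement. Dividing through by $q^{r/2}$ and substituting into the identity yields the claimed formula for $(u\eta_i^{(u,q^r)}+1)/q^{r/2}$.

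The hard part is not any of the arithmetic above---it is pure bookkeeping---but the careful alignment of the \emph{exceptional class} between Moisio's indexing of cyclotomic cosets and our convention ${\cal C}_i^{(u,q^r)}=\gamma^i\langle\gamma^u\rangle$. Determining when $\delta=0$ versus $\delta=u/2$ amounts to tracking in which coset a primitive $(p^d+1)$-th root of $-1$ lies in the odd-$p$ case, and that dichotomy is precisely the source of the four sub-cases collected in the definition of $\delta$. Once these conventions are matched, no further computation is needed.
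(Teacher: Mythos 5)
Your proposal is correct and follows exactly the route the paper takes: the paper itself offers no independent proof, but derives the theorem as a direct consequence of Moisio's Theorem 1 via the same identity $\sum_{x \in \bbbf_{q^r}}\chi'(ax^u)=u\eta_i^{(u,q^r)}+1$ for $a \in {\cal C}_i^{(u,q^r)}$ that you justify. Your additional remarks on the $u$-to-one argument for that identity and on matching the exceptional coset $\delta$ are accurate and merely make explicit what the paper leaves implicit.
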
 

\begin{remark}\label{rmMoisio} 
As shown below, by means of the previous theorem, it is possible to determine, in a single result, the Hamming weight enumerator of all one-weight and semiprimitive two-weight irreducible cyclic codes. 
\end{remark}

In certain circumstances it is necessary to consider the set of products of the form $xy$, where $x \in {\cal C}_i^{(N,q^r)}$, $y \in \bbbf_{q}^*$, and $0 \leq i < N$. The following result goes in this direction:

\begin{lemma}\label{defmultiset}
\cite[Lemma 5]{Ding1} Let $N$ be a positive divisor of $q^r-1$ and let $i$ be any integer with $0 \leq i < N$. Fix $u=\gcd(\Delta,N)$. We have the following multiset equality:

$$\left\{ xy : x \in {\cal C}_i^{(N,q^r)}, \;y \in \bbbf_{q}^* \right\}=\frac{(q-1)u}{N} * {\cal C}_i^{(u,q^r)} \; ,$$

\noindent
where $\frac{(q-1)u}{N} * {\cal C}_i^{(u,q^r)}$ denotes the multiset in which each element in the set ${\cal C}_i^{(u,q^r)}$ appears in the multiset with multiplicity $\frac{(q-1)u}{N}$.
\end{lemma}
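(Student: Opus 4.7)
The plan is to exploit the multiplicative subgroup structure of $\bbbf_{q^r}^*$. Because $\gamma$ is a primitive element of $\bbbf_{q^r}$ and $\bbbf_q^*$ is the unique subgroup of $\bbbf_{q^r}^*$ of order $q-1$, we have $\bbbf_q^* = \langle \gamma^\Delta \rangle$, while by definition ${\cal C}_i^{(N,q^r)} = \gamma^i \langle \gamma^N \rangle$. Writing $H := \langle \gamma^N \rangle$ and $K := \langle \gamma^\Delta \rangle$, the multiset on the left of the claim is $\gamma^i \cdot (H \cdot K)$ (with multiplicities inherited from the pair count), and it only remains to identify its support and to show that each point in that support is hit the same number of times.

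First I would pin down the support. Inside the cyclic group $\langle \gamma \rangle$ of order $q^r-1$, the product $H K$ coincides with $\langle \gamma^{\gcd(N,\Delta)} \rangle = \langle \gamma^u \rangle$, a standard fact about subgroups of cyclic groups. Translating by $\gamma^i$, the set of distinct values $xy$ that actually occur is exactly ${\cal C}_i^{(u,q^r)}$.

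Next, for the multiplicities I would invoke the elementary observation that, in any abelian group, every element of $HK$ admits exactly $|H \cap K|$ decompositions $hk$ with $h \in H$, $k \in K$: if $h_1k_1 = h_2k_2$ then $h_1^{-1}h_2 = k_1 k_2^{-1} \in H \cap K$, and each element of $H \cap K$ produces a new decomposition in this way. Multiplication by $\gamma^i$ preserves these counts, so every element of ${\cal C}_i^{(u,q^r)}$ appears in the multiset the same number $|H \cap K|$ of times. A short calculation using $\Delta = (q^r-1)/(q-1)$ together with the identity $\mathrm{lcm}(N,\Delta)\gcd(N,\Delta) = N\Delta$ gives $|H \cap K| = |\langle \gamma^{\mathrm{lcm}(N,\Delta)} \rangle| = (q^r-1)u/(N\Delta) = (q-1)u/N$, as claimed.

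The main (and really the only) conceptual obstacle is the uniformity of the multiplicity across ${\cal C}_i^{(u,q^r)}$; once this is secured by the $|H \cap K|$ lemma, the exact value $(q-1)u/N$ can also be double-checked by a global count, namely dividing the total number of pairs $(x,y)$, which is $(q^r-1)(q-1)/N$, by the size of the support, $(q^r-1)/u$.
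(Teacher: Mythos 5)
Your argument is correct and complete. Note, however, that the paper itself offers no proof of this statement: it is quoted verbatim as \cite[Lemma 5]{Ding1}, so there is no internal proof to compare against. Your route --- identifying the left-hand side as the translate $\gamma^i(HK)$ of a product of subgroups $H=\langle\gamma^N\rangle$, $K=\langle\gamma^\Delta\rangle=\bbbf_q^*$ of the cyclic group $\bbbf_{q^r}^*$, getting the support from $HK=\langle\gamma^{\gcd(N,\Delta)}\rangle$ and the uniform multiplicity from the $|H\cap K|$-to-one fibering of $H\times K\to HK$ --- is a clean and standard way to prove it, and the computation $|H\cap K|=(q^r-1)/\mathrm{lcm}(N,\Delta)=(q-1)u/N$ is right (using that both $N$ and $\Delta$ divide $q^r-1$, so $\mathrm{lcm}(N,\Delta)$ does too). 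This is essentially equivalent to the exponent-counting proof in the cited source, which writes $xy=\gamma^{i+Nk+\Delta j}$ and counts solutions of a congruence; your phrasing in terms of $HK$ and $H\cap K$ packages that count more conceptually, and your closing global count (total pairs divided by support size) is a useful independent check of the multiplicity once uniformity is established.
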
 

The following definitions are inspired by and similar to those of \cite{Zhu}.

\begin{definition}\label{defuno} Let $b$ be an integer, with $1\leq b\leq r$. Let ${\cal P}(b)$ be the subset of cardinality $(q^b-1)/(q-1)$ in $\bbbf_{q^r}^*$ defined as

$${\cal P}(b):=\bigcup_{j=1}^{b-1} \{\gamma^{(j-1)N}+x_1\gamma^{jN}+\cdots+x_{b-j}\gamma^{(b-1)N} : x_1,\cdots,x_j \in \bbbf_q \} \cup \{\gamma^{(b-1)N}\} \;.$$
\end{definition} 

\begin{remark}\label{rmpb} 
Note that ${\cal P}(1)=\{1\}$.
\end{remark}

\begin{definition}\label{defdos} 
Let $b$ be as in Definition \ref{defuno} and fix $u=\gcd(\Delta,N)$. For $0\leq i < u$, we define $\mu_{(i)}(b)$ as

$$\mu_{(i)}(b):=|\{ x \in {\cal P}(b) : x \in {\cal C}_i^{(u,q^r)} \}| \;.$$
\end{definition} 

\begin{remark}\label{rminvariant} 
Since ${\cal C}_0^{(2,q^r)}=\{ x\in \bbbf_{q^r}^* : x \mbox{ is a square in } \bbbf_{q^r}^*\}$, note that $\mu_{(i)}(b)$ is indeed a generalization of the invariant $\mu(b)$ in \cite{Zhu}. Furthermore, note that $\mu_{(0)}(1)=1$ and $\mu_{(i)}(1)=0$, for $1 \leq i < u$.  
\end{remark}

The following important result from \cite{Zhu}, is key in order to achieve our goals.

\begin{lemma}\label{lemauno}
\cite[Lemma 4.3]{Zhu}
Let $\CLI$ be as in Definition \ref{defcero} and let $c(a) \in \CLI$ be a codeword. Then, for any integer $1\leq b\leq r$,

$$w_b(c(a))=\frac{1}{q^{b-1}}\sum_{\theta \in {\cal P}(b)}w_H(c(\theta a)) \;.$$
\end{lemma}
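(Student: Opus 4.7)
The strategy is to count zero $b$-windows in $c(a)$ directly and match this count against $\sum_{\theta\in{\cal P}(b)}w_H(c(\theta a))$ via the linear-algebraic structure of the $\bbbf_q$-span of $\{1,\gamma^N,\ldots,\gamma^{(b-1)N}\}$, rather than via character sums.

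I would begin by writing $w_b(c(a))=n-Z$, where $Z$ counts the indices $i$ for which the window $(\mbox{\Tr}_{\bbbf_{q^r}/\bbbf_q}(a\gamma^{N(i+j)}))_{j=0}^{b-1}$ is the zero vector. By $\bbbf_q$-linearity of the trace, this window vanishes exactly when the functional $L_i(z):=\mbox{\Tr}_{\bbbf_{q^r}/\bbbf_q}(a\gamma^{Ni}z)$ is identically zero on the whole $\bbbf_q$-subspace $V:=\bbbf_q\cdot 1+\bbbf_q\cdot\gamma^N+\cdots+\bbbf_q\cdot\gamma^{(b-1)N}\subset\bbbf_{q^r}$. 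Because $r=\Ord_n(q)$, the element $\gamma^N$ has $\bbbf_q$-minimal polynomial of degree $r$, so $\{1,\gamma^N,\ldots,\gamma^{(b-1)N}\}$ is $\bbbf_q$-linearly independent and $\dim_{\bbbf_q}V=b$.

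Next I would verify, directly from Definition \ref{defuno}, that ${\cal P}(b)$ is a transversal for the $\bbbf_q^*$-orbits on $V\setminus\{0\}$: the piecewise union in that definition selects, for each nonzero $v\in V$, the unique scalar multiple whose first nonzero coordinate in the basis $\{1,\gamma^N,\ldots,\gamma^{(b-1)N}\}$ equals $1$. This yields the partition $V\setminus\{0\}=\bigsqcup_{\theta\in{\cal P}(b)}\bbbf_q^*\theta$ and reconfirms the count $|{\cal P}(b)|=(q^b-1)/(q-1)$.

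Swapping the order of summation in $\sum_{\theta\in{\cal P}(b)}w_H(c(\theta a))=n|{\cal P}(b)|-\sum_{i=0}^{n-1}|\{\theta\in{\cal P}(b):L_i(\theta)=0\}|$ then reduces the problem to a dichotomy on $L_i$. If $L_i\equiv 0$, which occurs for exactly $Z$ indices $i$, all $(q^b-1)/(q-1)$ representatives satisfy $L_i(\theta)=0$; otherwise $L_i$ is a nonzero $\bbbf_q$-linear functional on the $b$-dimensional space $V$, its kernel has dimension $b-1$, and contains exactly $(q^{b-1}-1)/(q-1)$ of the projective representatives. Combining the two cases gives $\sum_{\theta\in{\cal P}(b)}w_H(c(\theta a))=(n-Z)(q^b-q^{b-1})/(q-1)=q^{b-1}w_b(c(a))$, and dividing by $q^{b-1}$ is the stated identity. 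The main obstacle is the structural claim that ${\cal P}(b)$ is a genuine transversal for $V\setminus\{0\}$ modulo $\bbbf_q^*$: the cardinality $(q^b-1)/(q-1)$ is suggestive, but the decomposition itself rests on interpreting the piecewise definition of ${\cal P}(b)$ as normalisation by leading coordinate, together with the linear independence furnished by $r=\Ord_n(q)$.
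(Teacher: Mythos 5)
Your argument is correct, and it is genuinely different from the one the paper relies on. The paper does not reprove this lemma: it imports it from \cite[Lemma 4.3]{Zhu}, and Remark \ref{rmcero} makes clear that the proof there runs through the orthogonality relation for the canonical additive character, i.e.\ it expresses $n-w_1(c(\theta a))$ as a character sum $\frac{1}{q}\sum_{x}\sum_{y}\chi(y\theta ax^N)$ and manipulates these sums to reassemble the count of zero $b$-windows. Your route replaces the character sums by a direct double count: the window at position $i$ vanishes iff the functional $L_i(z)=\mbox{\Tr}_{\bbbf_{q^r}/\bbbf_q}(a\gamma^{Ni}z)$ kills the $b$-dimensional space $V$ (linear independence of $1,\gamma^N,\dots,\gamma^{(b-1)N}$ being exactly where $r=\mbox{\Ord}_n(q)$ enters), and the rank dichotomy for $L_i|_V$ --- all $(q^b-1)/(q-1)$ projective representatives in the kernel versus $(q^{b-1}-1)/(q-1)$ of them --- produces the factor $q^{b-1}$ after the exchange of summation. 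Both proofs ultimately rest on the same structural fact, namely that ${\cal P}(b)$ is a transversal for $V\setminus\{0\}$ modulo $\bbbf_q^*$ (your reading of Definition \ref{defuno} as normalisation by the leading coordinate is the intended one; the subscript range ``$x_1,\dots,x_j$'' there is a misprint for ``$x_1,\dots,x_{b-j}$'', as the cardinality $(q^b-1)/(q-1)$ and Example \ref{ejemuno}(b) confirm). What the character-sum version buys is notation that is reused verbatim in the proof of Theorem \ref{teomio1}; what your version buys is an elementary, self-contained proof that also handles the degenerate codeword $a=0$ and makes transparent why the identity fails for $b>r$ (the set $\{1,\gamma^N,\dots,\gamma^{(b-1)N}\}$ would no longer be independent). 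Either proof is acceptable; yours could stand in the paper as a replacement for the citation.
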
 

\begin{remark}\label{rmcero} 
The previous lemma is key for us, because although the condition $\gcd(\frac{q^r-1}{q-1},N)=2$ is one of the main assumptions in \cite{Zhu}, Lemma 4.3 is beyond such condition. However it is important to observe that there is a small misprint in the proof of Lemma 4.3; more specifically the equality

$$n-w_1(c(a))=\sum_{x \in I} \frac{1}{q} \sum_{y\in\bbbf_{q}} \chi(yax)\;,$$

\noindent
should be

$$n-w_1(c(a))=\sum_{x \in I} \frac{1}{q} \sum_{y\in\bbbf_{q}} \chi(yax^N)\;.$$  
\end{remark}

\section{Preliminary results}\label{sectres}

In the light of Remark \ref{rminvariant}, the following is a generalization of \cite[Lemma 2.1]{Zhu}.

\begin{lemma}\label{lemados} 
Let $b$ and $\mu_{(i)}(b)$ be as in Definition \ref{defdos}. If $b=r$ then, for any $0\leq i < u$, we have

$$\mu_{(i)}(r)=\frac{1}{u}|{\cal P}(b)|=\frac{\Delta}{u}\;.$$
\end{lemma}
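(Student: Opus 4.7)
The plan is to recognize that $\mathcal{P}(r)$ is, up to the combinatorial dressing in Definition \ref{defuno}, nothing but a complete set of coset representatives for $\bbbf_{q^r}^{*}/\bbbf_{q}^{*}$, and that multiplication by $\bbbf_{q}^{*}$ preserves each cyclotomic class $\mathcal{C}_i^{(u,q^r)}$. Once both facts are in place, the equality $\mu_{(i)}(r)=\Delta/u$ drops out from an orbit-counting argument, independent of $i$.

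First I would verify the representative claim. Because $r=\mbox{\Ord}_n(q)$, the minimal polynomial of $\gamma^{-N}$ (and hence of $\gamma^{N}$) over $\bbbf_{q}$ has degree exactly $r$, so $\{1,\gamma^{N},\gamma^{2N},\ldots,\gamma^{(r-1)N}\}$ is a $\bbbf_{q}$-basis of $\bbbf_{q^r}$. Thus every $z\in\bbbf_{q^r}^{*}$ has a unique expansion $z=\sum_{i=0}^{r-1}x_i\gamma^{iN}$ with $x_i\in\bbbf_q$ not all zero. Letting $j-1$ be the smallest index with $x_{j-1}\neq 0$ and dividing by $x_{j-1}$ exhibits $z$ uniquely as $z=\lambda\, y$ with $\lambda=x_{j-1}\in\bbbf_{q}^{*}$ and $y\in\mathcal{P}(r)$ (the case $j=r$ corresponds to the singleton $\{\gamma^{(r-1)N}\}$). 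Hence the map $\mathcal{P}(r)\times\bbbf_{q}^{*}\to\bbbf_{q^r}^{*}$, $(y,\lambda)\mapsto\lambda y$, is a bijection, which gives a combinatorial interpretation for the cardinality $|\mathcal{P}(r)|=\Delta$.

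Second I would observe that since $\bbbf_{q}^{*}=\langle\gamma^{\Delta}\rangle$ and $u\mid\Delta$, every element $\lambda\in\bbbf_{q}^{*}$ is of the form $\gamma^{ku}$ for some integer $k$; therefore multiplication by $\lambda$ maps $\mathcal{C}_i^{(u,q^r)}=\gamma^i\langle\gamma^u\rangle$ into itself. Combining this with the bijection above, each class $\mathcal{C}_i^{(u,q^r)}$ is a disjoint union of $\bbbf_{q}^{*}$-orbits, each of size $q-1$, and the orbits are in bijection with the representatives lying in $\mathcal{P}(r)\cap\mathcal{C}_i^{(u,q^r)}$. Counting gives
\[
\mu_{(i)}(r)\,(q-1)\;=\;|\mathcal{C}_i^{(u,q^r)}|\;=\;\frac{q^r-1}{u},
\]
whence $\mu_{(i)}(r)=\frac{q^r-1}{u(q-1)}=\frac{\Delta}{u}$, independently of $i$.

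I expect the only delicate point to be the first step: one must explicitly check that the stratification in Definition \ref{defuno} (with its ``leading $\gamma^{(j-1)N}$'') is precisely designed to pick out one representative from each $\bbbf_{q}^{*}$-orbit, and this requires the linear independence of $1,\gamma^{N},\ldots,\gamma^{(r-1)N}$, which is exactly where the main assumption $r=\mbox{\Ord}_n(q)$ enters. Everything else is a formal orbit count together with the divisibility $u\mid\Delta$ built into $u=\gcd(\Delta,N)$.
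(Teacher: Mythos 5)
Your proposal is correct and follows essentially the same route as the paper's proof: the decomposition $\bbbf_{q^r}^*=\bigsqcup_{x\in{\cal P}(r)}x\bbbf_q^*$, the observation that $\bbbf_q^*=\langle\gamma^{\Delta}\rangle$ with $u\mid\Delta$ makes each cyclotomic class $\bbbf_q^*$-stable, and the count $\mu_{(i)}(r)(q-1)=\frac{q^r-1}{u}$. The only difference is that you justify the coset-representative property of ${\cal P}(r)$ in detail via the linear independence of $1,\gamma^N,\ldots,\gamma^{(r-1)N}$, whereas the paper asserts it as clear.
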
 

\begin{proof}
Clearly

$$\bbbf_{q^r}^*=\bigsqcup_{x \in {\cal P}(b)} x\bbbf_{q}^* \;,$$

\noindent
where $\sqcup$ is a disjoint union. Now, since $u| \Delta$ and $\langle \gamma^{\Delta} \rangle=\bbbf_{q}^*$, $x \in {\cal C}_i^{(u,q^r)}$ if and only if each element of $x\bbbf_{q}^*$ is also in ${\cal C}_i^{(u,q^r)}$. This implies that

$$\mu_{(i)}(r)(q-1)=\frac{q^r-1}{u}\;,$$

\noindent
which is the number of elements in ${\cal C}_i^{(u,q^r)}$. This completes the proof. \qed
\end{proof} 

It is already known the Hamming weight enumerator of all one-weight and semiprimitive two-weight irreducible cyclic codes over any finite field (see for example \cite{Vega}). By means of the following theorem we recall such a result and give an alternative proof of it. As will be pointed out below, this alternative proof will be important for fulfilling our goals.

\begin{theorem}\label{teomio1}
Let $\CLI$ be as in Definition \ref{defcero}. Fix $u=\gcd(\Delta,N)$. Assume that $u=1$ or $p$ is semiprimitive modulo $u$. Let $d$ be the smallest positive integer such that $u | (p^d+1)$ and let $s=1$ if $u=1$ and $s=(rt)/(2d)$ if $u>1$. Fix

$$W_A=\frac{n q^{r/2-1}}{\Delta}(q^{r/2}-(-1)^{s-1}(u-1))\;\;\; \mbox{ and } \;\;\;W_B=\frac{n q^{r/2-1}}{\Delta}(q^{r/2}-(-1)^s)\;.$$ 

\noindent
Then, $\CLI$ is an $[n,r]$ irreducible cyclic code whose Hamming weight enumerator is 

\begin{equation}\label{eqtres}
1+\frac{q^r-1}{u}T^{W_A}+\frac{(q^r-1)(u-1)}{u}T^{W_B} \; .
\end{equation}
\end{theorem}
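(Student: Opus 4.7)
The plan is to express each Hamming weight as an exponential sum using the orthogonality relation, convert it to a Gaussian period of order $u$ via Lemma \ref{defmultiset}, and evaluate via Theorem \ref{teouno}; the partition of $\bbbf_{q^r}^*$ into cyclotomic classes of order $u$ then supplies the frequencies.

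First, I would fix any $a\in\bbbf_{q^r}^*$ (the case $a=0$ trivially gives the zero codeword) and apply (\ref{eqOrtbis}) to the codeword $c(a)=(\mbox{\Tr}_{\bbbf_{q^r}/\bbbf_q}(a\gamma^{Ni}))_{i=0}^{n-1}$. Using the transitivity identity $\chi(\mbox{\Tr}_{\bbbf_{q^r}/\bbbf_q}(z))=\chi'(z)$, the sum lifts from $\bbbf_q$ to an exponential sum over $\bbbf_{q^r}$. The crucial observation is that, because $\gamma^N$ has order $n$, the set $\{a\gamma^{Ni}:0\le i<n\}$ is exactly the cyclotomic class ${\cal C}_j^{(N,q^r)}$, where $j$ is the unique index with $a\in{\cal C}_j^{(N,q^r)}$, so the inner sum becomes one over ${\cal C}_j^{(N,q^r)}\times\bbbf_{q}^*$.

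Next, Lemma \ref{defmultiset} would let me replace that double sum with $\frac{(q-1)u}{N}\eta_j^{(u,q^r)}$, the subscript $j$ now read modulo $u$. Simplifying with the identity $(q-1)\Delta=q^r-1=nN$ puts the weight in the form
$$w_H(c(a))=\frac{nq^{r/2-1}}{\Delta}\left(q^{r/2}-\frac{u\,\eta_j^{(u,q^r)}+1}{q^{r/2}}\right).$$
Theorem \ref{teouno} then immediately splits the weight into $W_A$ when $j\equiv\delta\pmod u$ and $W_B$ otherwise. The one-weight case $u=1$ is absorbed by the same formula: $\eta_0^{(1,q^r)}=-1$ together with the prescription $s=1$ make the parenthesized factor equal to $q^{r/2}$, so every nonzero codeword has weight $W_A$.

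For the frequencies, the $\bbbf_q$-linear map $a\mapsto c(a)$ from $\bbbf_{q^r}$ to $\CLI$ is a bijection by the main assumption $r=\mbox{\Ord}_n(q)$, so nonzero codewords are in one-to-one correspondence with $\bbbf_{q^r}^*$. Partitioning $\bbbf_{q^r}^*$ into its $u$ cyclotomic classes of order $u$, one class (of size $(q^r-1)/u$) gives weight $W_A$ and the remaining $u-1$ classes together contribute $(u-1)(q^r-1)/u$ codewords of weight $W_B$, yielding exactly (\ref{eqtres}). The main delicate point is verifying that Lemma \ref{defmultiset} is being applied with the cyclotomic index transferring correctly from order $N$ to order $u$, so that the weight truly depends only on $j\bmod u$; everything else is arithmetic using $(q-1)\Delta=nN$.
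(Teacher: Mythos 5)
Your proposal is correct and follows essentially the same route as the paper: orthogonality relation plus $\chi(\mbox{\Tr}_{\bbbf_{q^r}/\bbbf_q}(z))=\chi'(z)$ to lift the sum, Lemma \ref{defmultiset} to pass from cyclotomic classes of order $N$ to Gaussian periods of order $u$, Theorem \ref{teouno} to evaluate them, and the class sizes $\frac{q^r-1}{u}$ for the frequencies. The only cosmetic difference is that you absorb $a$ into the class index (working with ${\cal C}_j^{(N,q^r)}$) whereas the paper keeps $a$ as an external multiplier of ${\cal C}_0^{(N,q^r)}$; the index transfer modulo $u$ that you flag as the delicate point works exactly as you describe, since $u\mid N$.
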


\begin{remark}\label{rmone-weight} 
Note that Theorem \ref{teomio1} gives, in a single result, an explicit description of the Hamming weight enumerators of all one-weight ($u=1$) and two-weight ($2\leq u < \Delta$) irreducible cyclic codes, excluding only the exceptional two-weight irreducible cyclic codes studied in \cite{Schmidt}. Therefore observe that the two-weight irreducible cyclic codes considered in \cite{Zhu} ($u=\gcd(\Delta,N)=2$) belong also to Theorem \ref{teomio1}.
\end{remark}

\begin{proof}
First note that if $u>1$, then there must exist an integer $s$ such that $rt=2sd$. 

For $a \in \bbbf_{q^r}^*$, let $c(a) = (\mbox{\Tr}_{\bbbf_{q^r}/\bbbf_q}(a \gamma^{Ni}))_{i=0}^{n-1} \in \CLI$. Let $\chi$ and $\chi'$ be the canonical additive characters of $\bbbf_{q}$ and $\bbbf_{q^r}$, respectively. Thus, by the orthogonality relation for the character $\chi$ (see (\ref{eqOrtbis})) the Hamming weight of the codeword $c(a)$, $w_H(c(a))$, is 

\begin{eqnarray*}
w_H(c(a)) &=& n-\frac{1}{q}\sum_{i=0}^{n-1}\sum_{y\in\bbbf_{q}}\chi(y\mbox{\Tr}_{\bbbf_{q^r}/\bbbf_q}(a \gamma^{Ni})) \ ,\\
&=&n-\frac{n}{q}-\frac{1}{q}\sum_{y\in\bbbf_{q}^*}\sum_{x\in {\cal C}_0^{(N,q^r)}}\chi'(yax) \ ,\\
&=&n-\frac{n}{q}-\frac{(q-1)u}{qN}\sum_{z\in {\cal C}_0^{(u,q^r)}}\chi'(az) \ ,
\end{eqnarray*}

\noindent
where the last equality holds by Lemma \ref{defmultiset}. Now, suppose that $a \in {\cal C}_i^{(u,q^r)}$ for some $0\leq i < u$. Thus

\begin{eqnarray*}
w_H(c(a)) &=& n-\frac{n}{q}-\frac{(q-1)}{qN}u \eta_i^{(u,q^r)} \ ,\\
&=&\frac{n}{\Delta q}(q^r-1)-\frac{n}{\Delta q}u \eta_i^{(u,q^r)} \ ,\\
&=&\frac{nq^{r-1}}{\Delta}-\frac{n}{\Delta q}(u \eta_i^{(u,q^r)}+1) \ ,\\
&=&\frac{nq^{r-1}}{\Delta}-\frac{n q^{r/2-1}}{\Delta}\frac{(u\eta_i^{(u,q^r)} + 1)}{q^{r/2}} \ ,\\
&=&\frac{n q^{r/2-1}}{\Delta}(q^{r/2}-\frac{u\eta_i^{(u,q^r)} + 1}{q^{r/2}}) \ .
\end{eqnarray*}

\noindent
Let $\delta$ be as in Theorem \ref{teouno} and observe that $i \equiv \delta \pmod{u}$ iff $a \in {\cal C}_{\delta}^{(u,q^r)}$. Therefore, owing to Theorem \ref{teouno}, we have

\begin{equation}\label{eqwb0}
w_H(c(a))=
\left\{ \begin{array}{cl}
		W_A\; & \mbox{if $a \in {\cal C}_{\delta}^{(u,q^r)}$}\:, \\
		  &  \\
		W_B\; & \mbox{if $a \in \bbbf_{q^r}^* \setminus {\cal C}_{\delta}^{(u,q^r)}$ }\:.
			\end{array}
\right .
\end{equation}

\noindent
The result now follows from the fact that $|{\cal C}_{\delta}^{(u,q^r)}|=\frac{q^r-1}{u}$ and $|\bbbf_{q^r}^* \setminus{\cal C}_{\delta}^{(u,q^r)}|=\frac{(q^r-1)(u-1)}{u}$. \qed
\end{proof}

\section{The $b$-symbol weight distribution of all one-weight and two-weight semiprimitive irreducible cyclic codes}\label{seccuatro}

We are now in conditions to present our main results.

\begin{theorem}\label{teomio2}
Assume the same notation and assumptions as in Theorem \ref{teomio1}. Let ${\cal P}(b)$, $\mu_{(i)}(b)$, and $\delta$ be as before. For $0 \leq i < u$ and $1 \leq b \leq r$, let 

\begin{equation}\label{eqwb1}
W_i^{(b)} = \frac{(q-1)q^{r/2-b}}{N}\left[|{\cal P}(b)|\left(q^{r/2}-(-1)^s\right)+(-1)^s u\mu_{((\delta-i)\!\!\!\!\!\pmod{u})}(b)\right] \ .
\end{equation}

\noindent
Then, the $b$-symbol Hamming weight enumerator of $\CLI$ is

\begin{equation}\label{eqwb2}
A(T)=1+\frac{q^r-1}{u}\sum_{i=0}^{u-1}T^{W_i^{(b)}}\;.
\end{equation}
\end{theorem}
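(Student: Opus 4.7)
The plan is to combine Lemma \ref{lemauno} with equation (\ref{eqwb0}) from the proof of Theorem \ref{teomio1}. By Lemma \ref{lemauno}, for any codeword $c(a) \in \CLI$ one has
\[
w_b(c(a)) = \frac{1}{q^{b-1}} \sum_{\theta \in {\cal P}(b)} w_H(c(\theta a)),
\]
so the task reduces to classifying, for fixed $a \in {\cal C}_i^{(u,q^r)}$, the cyclotomic class of $\theta a$ as $\theta$ ranges over ${\cal P}(b)$, and then applying (\ref{eqwb0}) term by term.

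The first key observation is that the cyclotomic classes ${\cal C}_j^{(u,q^r)} = \gamma^j \langle \gamma^u \rangle$ form a group under multiplication modulo $u$: if $a \in {\cal C}_i^{(u,q^r)}$ and $\theta \in {\cal C}_j^{(u,q^r)}$, then $\theta a \in {\cal C}_{(i+j)\bmod u}^{(u,q^r)}$. Hence $\theta a \in {\cal C}_{\delta}^{(u,q^r)}$ if and only if $\theta \in {\cal C}_{(\delta - i)\bmod u}^{(u,q^r)}$, and the number of $\theta \in {\cal P}(b)$ satisfying this is, by Definition \ref{defdos}, exactly $\mu_{((\delta - i)\bmod u)}(b)$. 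Consequently, among the $|{\cal P}(b)|$ summands of the Lemma \ref{lemauno} expression, $\mu_{((\delta - i)\bmod u)}(b)$ of them equal $W_A$ and the remaining $|{\cal P}(b)| - \mu_{((\delta - i)\bmod u)}(b)$ equal $W_B$.

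The second step is the algebraic simplification. Using the definitions of $W_A$ and $W_B$ one computes $W_A - W_B = \frac{n q^{r/2-1}}{\Delta}(-1)^s u$, and rewriting the factor $\frac{n}{\Delta}$ via the identity $nN = (q-1)\Delta$ (so $n/\Delta = (q-1)/N$) gives
\[
w_b(c(a)) = \frac{(q-1)q^{r/2-b}}{N}\Bigl[|{\cal P}(b)|(q^{r/2} - (-1)^s) + (-1)^s u\,\mu_{((\delta - i)\bmod u)}(b)\Bigr],
\]
which is precisely $W_i^{(b)}$ as in (\ref{eqwb1}). Finally, since ${\cal C}_i^{(u,q^r)}$ has $\frac{q^r-1}{u}$ elements for each $i = 0,1,\ldots,u-1$, and $c(0)$ is the zero codeword, collecting contributions over $a \in \bbbf_{q^r}^*$ yields the $b$-symbol enumerator (\ref{eqwb2}).

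The main obstacle is the second step: the careful bookkeeping to verify that the combination of $W_A$ and $W_B$ weights produced by the partition of ${\cal P}(b)$ indeed simplifies to the compact formula (\ref{eqwb1}). Everything else is essentially applying Lemma \ref{lemauno} and using the multiplicative structure of the cyclotomic classes, both of which are immediate once the invariant $\mu_{(i)}(b)$ is in place.
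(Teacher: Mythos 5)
Your proposal is correct and follows essentially the same route as the paper's own proof: reduce via Lemma \ref{lemauno}, count the $\theta\in{\cal P}(b)$ landing in ${\cal C}_{(\delta-i)\bmod u}^{(u,q^r)}$ using (\ref{eqwb0}) and Definition \ref{defdos}, and substitute the explicit values of $W_A$ and $W_B$. Your algebraic simplification (including $W_A-W_B=\frac{nq^{r/2-1}}{\Delta}(-1)^s u$ and $n/\Delta=(q-1)/N$) checks out and is just a more explicit version of the paper's final substitution step.
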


\begin{proof}
Let $a\in \bbbf_{q^r}^*$ and let $c(a) \in \CLI$. Let $W_A$ and $W_B$ be as in Theorem \ref{teomio1} and suppose that $a\in {\cal C}_i^{(u,q^r)}$, for some $0 \leq i < u$. Thus, from (\ref{eqwb0}), $w_H(c(\theta a))=W_A$ iff $\theta a \in {\cal C}_\delta^{(u,q^r)}$ iff $\theta \in {\cal C}_{(\delta-i)\!\!\!\!\!\pmod{u}}^{(u,q^r)}$. But there are exactly $\mu_{((\delta-i)\!\!\!\!\!\pmod{u})}(b)$ elements $\theta$ in ${\cal P}(b)$ that satisfy the condition $\theta \in {\cal C}_{(\delta-i)\!\!\!\!\!\pmod{u}}^{(u,q^r)}$. Therefore, owing to Lemma \ref{lemauno}, $w_b(c(a))=W_i^{(b)}$ where

$$W_i^{(b)}=\frac{1}{q^{b-1}}\left[\mu_{((\delta-i)\!\!\!\!\!\pmod{u})}(b)W_A+\left(|{\cal P}(b)|-\mu_{((\delta-i)\!\!\!\!\!\pmod{u})}(b)\right)W_B \right]\;.$$

\noindent
Hence, (\ref{eqwb1}) follows by considering the explicit values of $W_A$ and $W_B$ in Theorem \ref{teomio1}. Finally, the $b$-symbol Hamming weight enumerator of $\CLI$ follows from the fact that $|{\cal C}_i^{(u,q^r)}|=\frac{q^r-1}{u}$, for any $0\leq i < u$. \qed
\end{proof} 

Note that the previous theorem is also valid for $b=1$. In fact, in this case, the ordinary Hamming weight enumerator in (\ref{eqtres}) is exactly the same as the $1$-symbol Hamming weight enumerator of (\ref{eqwb2}) (take into consideration Remarks \ref{rmpb} and \ref{rminvariant}). Therefore we see that Theorem \ref{teomio2} not only simplifies and generalizes \cite[Corollary 3.1]{Zhu} but also generalizes Theorem \ref{teomio1}.

\begin{example}\label{ejemuno}
The following are some examples of Theorem \ref{teomio2}.

\begin{enumerate}
\item[{\rm (a)}] Let $(q,r,N,b)=(3,4,2,3)$. Thus $u=\gcd(\Delta,N)=2$, $s=2$, $\delta=0$, and $|{\cal P}(b)|=q^2+q+1=13$. Since $\mu_{(0)}(b)=8$ (see \cite[Example 2.3]{Zhu}), $\mu_{(1)}(b)=|{\cal P}(b)|-\mu_{(0)}(b)=5$. Therefore, owing to Theorems \ref{teomio1} and \ref{teomio2}, $W_A=30$, $W_B=24$, $W_0^{(3)}=40$, $W_1^{(3)}=38$, and $\CLI$ is a $[40,4]_3$ irreducible cyclic code whose ordinary and $3$-symbol Hamming weight enumerators are $1+40T^{24}+40T^{30}$ and $1+40T^{38}+40T^{40}$, respectively.

\item[{\rm (b)}] Let $(q,r,N,b)=(2,4,3,2)$. Thus $u=\gcd(\Delta,N)=3$, $s=2$, $\delta=0$, and $|{\cal P}(b)|=q+1=3$. We take $\bbbf_{16}=\bbbf_2(\gamma)$ with $\gamma^4+\gamma+1=0$. Hence ${\cal P}(b)=\{1=\gamma^0,\gamma^3,1+\gamma^3=\gamma^{14}\}$. This means that $\mu_{(0)}(b)=2$, $\mu_{(1)}(b)=0$, and $\mu_{(2)}(b)=1$. Therefore, owing to Theorems \ref{teomio1} and \ref{teomio2}, $W_A=4$, $W_B=2$, $W_0^{(2)}=5$, $W_1^{(2)}=4$, $W_2^{(2)}=3$, and $\CLI$ is a $[5,4]_2$ irreducible cyclic code whose ordinary and $2$-symbol Hamming weight enumerators are $1+10T^{2}+5T^{4}$ and $1+5(T^{3}+T^{4}+T^{5})$, respectively.

\item[{\rm (c)}] Let $(q,r,N,b)=(4,3,9,2)$. Thus $u=\gcd(\Delta,N)=3$, $s=3$, $\delta=0$, and $|{\cal P}(b)|=q+1=5$. Let $\bbbf_{4}=\bbbf_2(\alpha)$ with $\alpha^2+\alpha+1=0$. We take $\bbbf_{64}=\bbbf_4(\gamma)$ with $\gamma^3+\gamma^2+\gamma+\alpha=0$. Hence ${\cal P}(b)=\{1=\gamma^0,\gamma^9,1+\gamma^9=\gamma^{27},1+\alpha\gamma^9=\gamma^{5},1+\alpha^2\gamma^9=\gamma^{40}\}$. This means that $\mu_{(0)}(b)=3$, $\mu_{(1)}(b)=1$, and $\mu_{(2)}(b)=1$. Therefore, owing to Theorems \ref{teomio1} and \ref{teomio2}, $W_A=4$, $W_B=6$, $W_0^{(2)}=6$, $W_1^{(2)}=W_2^{(2)}=7$, and $\CLI$ is a $[7,3]_4$ irreducible cyclic code whose ordinary and $2$-symbol Hamming weight enumerators are $1+21T^{4}+42T^{6}$ and $1+21T^{6}+42T^{7}$, respectively.

\item[{\rm (d)}] Let $(q,r,N,b)=(5,5,4,3)$. Thus $u=\gcd(\Delta,N)=1$ and $|{\cal P}(b)|=\mu_{(0)}(b)=q^2+q+1=31$. Therefore, owing to Theorems \ref{teomio1} and \ref{teomio2}, $W_A=625$, $W_0^{(3)}=775$, and $\CLI$ is a $[781,5]_5$ one-weight irreducible cyclic code whose ordinary and $3$-symbol Hamming weight enumerators are $1+3124T^{625}$ and $1+3124T^{775}$, respectively.
\end{enumerate}
\end{example}

\begin{remark}\label{rmone-weight} 
The previous numerical examples were corroborated with the help of a computer.
\end{remark}

As Example \ref{ejemuno}-(d) has shown, it is quite easy to obtain the $b$-symbol Hamming weight enumerator of a one-weight irreducible cyclic code (that is, when $u=1$). The following result shows it in the general case. 

\begin{theorem}\label{teomio3}
Assume the same notation as in Theorem \ref{teomio2}. If $u=\gcd(\Delta,N)=1$, then, for any $1\leq b\leq r$, the $b$-symbol Hamming weight enumerator of $\CLI$ is

$$A(T)=1+(q^r-1)T^{\frac{q^r-q^{r-b}}{N}}\;.$$
\end{theorem}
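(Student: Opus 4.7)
The plan is to specialize Theorem \ref{teomio2} to the case $u=1$ and show that all the quantities collapse nicely. When $u=\gcd(\Delta,N)=1$, the index $i$ in Theorem \ref{teomio2} ranges only over $i=0$, so the sum on the right-hand side of (\ref{eqwb2}) consists of a single term. Moreover, there is only one cyclotomic class of order $u=1$, namely $\mathcal{C}_0^{(1,q^r)}=\bbbf_{q^r}^*$, and therefore every element of $\mathcal{P}(b)$ lies in it. This forces
$$\mu_{(0)}(b)=|\mathcal{P}(b)|=\frac{q^b-1}{q-1}.$$

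Next I would plug these values into formula (\ref{eqwb1}). By the hypothesis of Theorem \ref{teomio1}, we have $s=1$ when $u=1$, so $(-1)^s=-1$; also $\delta=0$ by the definition of $\delta$ in Theorem \ref{teouno}. Using $\mu_{((\delta-0)\bmod 1)}(b)=\mu_{(0)}(b)=|\mathcal{P}(b)|$, the bracket in (\ref{eqwb1}) becomes
$$|\mathcal{P}(b)|\bigl(q^{r/2}-(-1)\bigr)+(-1)\cdot 1\cdot|\mathcal{P}(b)|=|\mathcal{P}(b)|\cdot q^{r/2}.$$
Thus
$$W_0^{(b)}=\frac{(q-1)q^{r/2-b}}{N}\cdot\frac{q^b-1}{q-1}\cdot q^{r/2}=\frac{q^{r-b}(q^b-1)}{N}=\frac{q^r-q^{r-b}}{N},$$
which is precisely the exponent in the claimed weight enumerator.

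Finally, since the coefficient $(q^r-1)/u$ in (\ref{eqwb2}) equals $q^r-1$ when $u=1$ and the sum has a single term $T^{W_0^{(b)}}$, the result follows immediately. Honestly, there is no substantive obstacle here: the argument is a direct algebraic specialization of Theorem \ref{teomio2}, and the only point requiring a small verification is the identity $\mu_{(0)}(b)=|\mathcal{P}(b)|$ when $u=1$, which is immediate from Definition \ref{defdos}.
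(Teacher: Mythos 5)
Your proposal is correct and follows essentially the same route as the paper, which simply notes that $u=1$ forces $\mu_{(0)}(b)=|{\cal P}(b)|=\frac{q^b-1}{q-1}$ and then invokes (\ref{eqwb1}); you have merely carried out the resulting algebra explicitly (and indeed the cancellation in the bracket works for any value of $s$, not just $s=1$). No gaps.
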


\begin{proof}
If $u=1$, then $\mu_{(0)}(b)=|{\cal P}(b)|=\frac{q^b-1}{q-1}$. Thus the result now follows from (\ref{eqwb1}). \qed
\end{proof} 

Similar to Theorem 3.3 in \cite{Zhu} we also have:

\begin{theorem}\label{teomio4}
Let $\CLI$ be as in Definition \ref{defcero}. Let $a\in \bbbf_{q^r}^*$ and consider the codeword $c(a)=(\mbox{\Tr}_{\bbbf_{q^r}/\bbbf_q}(a \gamma^{Ni}))_{i=0}^{n-1}$ in $\CLI$. Then

\begin{equation}\label{equltima}
w_r(c(a))=n\;,
\end{equation}

\noindent
and $\CLI$ is an MDS $b$-symbol code.
\end{theorem}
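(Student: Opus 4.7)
The approach is to apply Lemma \ref{lemauno} with $b=r$ and evaluate the resulting sum directly, bypassing the character-sum machinery used in the proof of Theorem \ref{teomio1}. The key observation is that the identity $\bbbf_{q^r}^* = \bigsqcup_{x \in {\cal P}(r)} x \bbbf_{q}^*$ (which appears inside the proof of Lemma \ref{lemados}) guarantees that $\{\theta a : \theta \in {\cal P}(r)\}$ is a complete set of coset representatives for $\bbbf_{q^r}^*/\bbbf_{q}^*$ for any fixed nonzero $a$. Combined with the $\bbbf_q$-linearity of the trace (which gives $c(za) = z \cdot c(a)$, and hence $w_H(c(za)) = w_H(c(a))$ for every $z \in \bbbf_{q}^*$), this lets me replace $\sum_{\theta \in {\cal P}(r)} w_H(c(\theta a))$ by $\frac{1}{q-1}\sum_{a' \in \bbbf_{q^r}^*} w_H(c(a'))$, a quantity independent of $a$.

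The next step is to compute this latter double sum by swapping the order of summation: for each coordinate index $i$, I count the $a' \in \bbbf_{q^r}^*$ for which $\Tr_{\bbbf_{q^r}/\bbbf_q}(a' \gamma^{Ni}) \neq 0$. Since $\gamma^{Ni}$ is nonzero, the substitution $x = a'\gamma^{Ni}$ is a bijection on $\bbbf_{q^r}^*$, so this count equals $|\{x \in \bbbf_{q^r}^* : \Tr_{\bbbf_{q^r}/\bbbf_q}(x) \neq 0\}| = q^r - q^{r-1} = (q-1)q^{r-1}$, independently of $i$. Summing over $i = 0,\ldots,n-1$ and dividing by $q-1$ yields $\sum_{\theta \in {\cal P}(r)} w_H(c(\theta a)) = n q^{r-1}$. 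Feeding this into Lemma \ref{lemauno} at $b=r$ gives $w_r(c(a)) = \frac{1}{q^{r-1}} \cdot n q^{r-1} = n$, which is exactly (\ref{equltima}).

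For the MDS assertion, since every nonzero codeword has $r$-symbol Hamming weight equal to $n$, we obtain $d_r(\CLI) = n$. Together with $M = |\CLI| = q^r$, the Singleton-type bound of Ding et al. reads $q^r \leq q^{n - d_r(\CLI) + r} = q^r$, and is thus attained with equality, so $\CLI$ is an MDS $r$-symbol code. I do not anticipate a serious obstacle in carrying out this plan: it relies only on the bijectivity of the maps $a' \mapsto a'\gamma^{Ni}$ on $\bbbf_{q^r}^*$ and on the coset decomposition already exhibited in the proof of Lemma \ref{lemados}, and the essential numerical input reduces to the elementary fact that a nontrivial trace hyperplane of $\bbbf_{q^r}$ has cardinality $q^{r-1}$.
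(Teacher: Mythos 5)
Your proof is correct, but it takes a genuinely different route from the paper's. The paper treats the theorem as a corollary of its main machinery: it substitutes $b=r$ into the weight formula $W_i^{(r)}$ obtained in the proof of Theorem \ref{teomio2}, invokes Lemma \ref{lemados} to replace $\mu_{((\delta-i)\!\!\pmod{u})}(r)$ by $\Delta/u$, and watches the expression collapse to $n$; this rests on the Gaussian-period evaluation of Theorem \ref{teouno} and hence, implicitly, on the semiprimitivity hypothesis of Theorems \ref{teomio1} and \ref{teomio2}. You instead combine Lemma \ref{lemauno} at $b=r$ with the coset decomposition $\bbbf_{q^r}^*=\bigsqcup_{x\in{\cal P}(r)}x\bbbf_q^*$ and the $\bbbf_q$-homogeneity $w_H(c(za))=w_H(c(a))$ to rewrite $\sum_{\theta\in{\cal P}(r)}w_H(c(\theta a))$ as $\frac{1}{q-1}\sum_{a'\in\bbbf_{q^r}^*}w_H(c(a'))$, which you evaluate by the elementary hyperplane count $|\{x\in\bbbf_{q^r}^*:\mbox{\Tr}_{\bbbf_{q^r}/\bbbf_q}(x)\neq 0\}|=(q-1)q^{r-1}$; all steps (the bijectivity of $a'\mapsto a'\gamma^{Ni}$, the injectivity of $a\mapsto c(a)$ giving $M=q^r$, and the Singleton-type bound) check out. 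What your argument buys is significant: it uses no characters, no Gaussian periods, and no semiprimitivity, so it actually proves the theorem in the generality in which it is stated (any $\CLI$ as in Definition \ref{defcero}), whereas the paper's proof as written formally covers only the semiprimitive case. What the paper's route buys is brevity given that the formula $W_i^{(r)}$ is already on the table, and it makes visible why the dependence on $i$ and on $\mu$ disappears precisely at $b=r$.
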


\begin{proof}
Suppose that $a\in {\cal C}_i^{(u,q^r)}$ for some $0 \leq i < u$. Thus, by the proof of Theorem \ref{teomio2}, $w_r(c(a))=W_i^{(r)}$ where

$$W_i^{(r)} = \frac{(q-1)q^{r/2-r}}{N}\left[|{\cal P}(r)|\left(q^{r/2}-(-1)^s\right)+(-1)^s u\mu_{((\delta-i)\!\!\!\!\!\pmod{u})}(r)\right] \ .$$

\noindent
But, owing to Lemma \ref{lemados}, $\mu_{((\delta-i)\!\!\!\!\!\pmod{u})}(r)=\frac{\Delta}{u}$. On the other hand, $|{\cal P}(r)|=\Delta=\frac{q^r-1}{q-1}$ and $n=\frac{q^r-1}{N}$. Thus, (\ref{equltima}) now follows. Finally, since $d_b(\CLI)=n$ and $|\CLI|=q^r$, $\CLI$ is an MDS $b$-symbol code. \qed
\end{proof}

\end{document}